\newcommand{\ab}{\mathbf{a}}
\newcommand{\pb}{\mathbf{p}}
\newcommand{\kb}{\mathbf{k}}
\newcommand{\lb}{\mathbf{l}}
\newcommand{\jb}{\mathbf{j}}
\newcommand{\ib}{\mathbf{i}}
\newcommand{\nb}{\mathbf{n}}
\newcommand{\vb}{\mathbf{v}}
\newcommand{\bb}{\mathbf{b}}
\newcommand{\Gb}{\mathbf{\Gamma}}
\newcommand{\Rb}{\mathbb{R}}
\newcommand{\Cb}{\mathbb{C}}
\newcommand{\Zb}{\mathbb{Z}}
\newcommand{\Mcal}{\mathcal{M}}
\newcommand{\aspect}{\kappa}
\newcommand{\xb}{\mathbf{x}}
\newcommand{\omb}{\boldsymbol{\omega}}
\newcommand{\hel}{h}
\newcommand{\zeroS}{\{ \mathbf{0} \}}
\newcommand{\aspm}{{ K} }
\newcommand{\transp}{\mathsf{T}}
\newcommand\reallywidehat[1]{%
\savestack{\tmpbox}{\stretchto{%
  \scaleto{%
    \scalerel*[\widthof{\ensuremath{#1}}]{\kern-.6pt\bigwedge\kern-.6pt}%
    {\rule[-\textheight/2]{1ex}{\textheight}}%WIDTH-LIMITED BIG WEDGE
  }{\textheight}% 
}{0.5ex}}%
\stackon[1pt]{#1}{\tmpbox}%
}
\newcommand{\crossmat}[1]
{\widehat{#1}}
\numberwithin{equation}{section}
\numberwithin{figure}{section}
\theoremstyle{plain}
          \newtheorem{theorem}{Theorem}[section]
          \newtheorem{proposition}{Proposition}[section]
          \newtheorem{lemma}[theorem]{Lemma}
          \theoremstyle{definition}
\begin{document}
\title[Poisson Structure of 3D Euler]
		{Poisson Structure of the Three-Dimensional Euler Equations in Fourier Space}

\author{H.R. Dullin$^{1}$, J.D. Meiss$^{2}$, and  J. Worthington$^{1,3}$}
\address[foot]{ (1) School of Mathematics and Statistics, The University of Sydney, Sydney, NSW 2006, Australia\\ 
                (2) Department of Applied Mathematics, University of Colorado, Boulder, CO 80309-0526, USA \\
			(3) Cancer Research Division, Cancer Council NSW, NSW, Australia}

\date{\today}
\keywords{Euler Equations, Hamiltonian Dynamics, Hydrodynamics}
%\subjclass[2000]{34C37, 37C29, 37J45, 70H09}
%\pacs{02.40.-k, 05.45.-a, 45.20.Jj}

\begin{abstract}

We derive a simple Poisson structure in the space of Fourier modes for the vorticity formulation of the Euler equations on a 
three-dimensional periodic domain. This allows us to analyse the structure of the Euler equations using a Hamiltonian framework. 
The Poisson structure is valid on the divergence free subspace only, and we show that using a projection operator
it can be extended to be valid in the full space. We then restrict the simple Poisson structure to the
divergence-free subspace on which the dynamics of the Euler equations take place, 
reducing the size of the system of ODEs by a third. 
The projected and the restricted Poisson structures are shown to have the helicity as a Casimir invariant. 
We conclude by showing that periodic shear flows in three dimensions are equilibria that correspond to singular 
points of the projected Poisson structure, and hence that the usual approach to study their nonlinear stability 
through the Energy-Casimir method fails.

\end{abstract}

\maketitle

\section{Introduction}

The dynamics of the Euler fluid equations has been a rich area of study, a recent review of which is given in \cite{Constantin07}. 
The Hamiltonian nature of hydrodynamical flows and the Poisson structure of the Euler equations has been described 
by \cite{Arnold78, Kuznetsov80, Zakharov97}, and studied extensively in, for example, \cite{Salmon88, Arnold98,Morrison98}.
For more background on fluid brackets and in particular on the question on how to project onto the divergence-free 
subspace see \cite{Chandre13}.

For the case of a two-dimensional, periodic domain, Arnold described the dynamics of the Euler equations as geodesics on the 
manifold of volume-preserving diffeomorphisms \cite{Arnold66,Arnold98}. This was used to obtain, using the energy-Casimir 
method, proofs of stability for examples of shear flows \cite{Arnold69,Arnold98}. 
This structure also led to the truncated Poisson structure approximating the Euler equations that was developed by 
Zeitlin \cite{Zeitlin90,Zeitlin05}, which in turn led to the development of a Poisson integrator \cite{Mclachlan93} and a number of 
stability and instability results for shear flows \cite{DMW16,Dullin2016}. Geometric integration of incompressible fluid 
equations can also be done in physical space, see, e.g. \cite{Salmon04,Pavlov11,Morrison17}. 

We present here three Poisson structures in Fourier space for the Euler equations on a three-dimensional, periodic domain. 
This first is only valid on the divergence free subspace on which the dynamics of the 
Euler equations takes place. Following ideas of \cite{Chandre13} we project this structure to make it valid in the full space
of Fourier coefficients, not just on the divergence free subspace. We then obtain a third version of the Poisson structure by restricting 
it to the divergence free subspace. Both the projected and restricted Poisson structures are shown to have the helicity as a Casimir.
These Poisson structures encapsulate the geometry underlying the dynamics of the Fourier coefficients of the vorticity. 
It is hoped that this will lay the foundations for geometric integration of the three-dimensional incompressible fluid  equations. 
As an application we study periodic shear flows in three dimensions, 
and show that for these equilibria  the  Poisson structure is singular.
This shows that the Energy-Casimir method cannot be applied for these equilibria.

%%%%%%%%%%%
%%%% Euler Vorticity Formulation
%%%%%%%%%%%
\section{Vorticity Formulation for the Euler Equations}

The vorticity form of the Euler equations for an incompressible, inviscid fluid on some three-dimensional domain 
$\mathcal{D}\subset\Rb^3$ is
\begin{equation}
\frac{\partial \Omega}{\partial t} =  \nabla \times (V \times \Omega) .
\label{eq:Lie}
\end{equation}
Here  $V:\mathcal{D}\times \Rb \to\Rb^3$ describes the instantaneous velocity of a fluid element at position $\xb =(x,y,z)\in\mathcal{D}$ and time $t \in \Rb$, and the vorticity $\Omega(\xb;t)=\nabla\times V$
measures the rotational motion of a fluid element. One can think of each component of $\Omega$ representing the rotation of the fluid around the axis parallel to the corresponding coordinate.
The mathematical structure of these equations is best understood when 
$\Omega$ is thought of as an exact 2-form whose time derivative is given by the Lie derivative of $\Omega$ in the direction of $V$.

We assume that the fluid density $\rho$ is constant --- without loss of generality we can set $\rho = 1$ --- and that
the two vector fields satisfy the \emph{divergence-free conditions} 
\begin{equation}\label{eq:divergence}
	\nabla \cdot V=0,\;\; \nabla \cdot \Omega=0.
\end{equation}
For the velocity field, this is a consequence of the assumed incompressibility; for the vorticity, this follows from the fact that the divergence of the curl of any vector must be zero.
When both fields are divergence free, \eqref{eq:Lie} reduces to
\begin{equation}
\frac{\partial \Omega}{\partial t}=(\Omega \cdot \nabla)V-(V \cdot \nabla ) \Omega.
\label{eq:omvpde}
\end{equation}

We will consider these equations on the periodic domain 
\[
	\mathcal{D}=\left [0,\frac{2\pi}{\aspect_x}\right )\times\left [0,\frac{2\pi}{\aspect_y}\right )\times\left [0,\frac{2\pi}{\aspect_z}\right )
\]
for the wavenumber basis $\aspect_x,\aspect_y,\aspect_z\in\Rb^+$. 
%Write $\aspb=(\aspect_x,\aspect_y,\aspect_z)$
The \emph{anisotropy matrix}
\[
	\aspm:=\begin{pmatrix}
		\aspect_x & 0 & 0 \\
		0 & \aspect_y &  0 \\
		0 & 0 & \aspect_z 
		\end{pmatrix}
\]
represents the unit wavenumbers in $\mathcal{D}$.
%it is diagonal, positive definite, and invertible.
For an isotropic domain we can scale so that $\aspm=\mathbb{I}_3$; in general we allow for anisotropy by letting the wavenumbers be arbitrary.
The allowed wavenumbers on $\mathcal{D}$ are elements of the scaled integer lattice
\begin{equation}\label{eq:lattice}
	D=\{ \aspm\ab \; | \; \ab\in \Zb^3\}.
\end{equation}
Compare this to the approach in \cite{Dullin2016}, where the lattice $\Zb^3$ is used for the mode numbers instead of 
$D$. By scaling the lattice instead, the influence of the domain size is mostly hidden from view. This should not be misinterpreted as meaning it is unimportant; for instance,  the stability results of \cite{Dullin2016} depend on the domain size. However, ``hiding'' the aspect ratios in this way makes the exposition  less cumbersome.

The velocity and vorticity Fourier coefficients for a wavenumber $\jb$ in the lattice \eqref{eq:lattice} are given by  integrals over the domain $\mathcal{D}$:
\[
	\begin{split}
	\vb_\jb  &= \frac{\aspect_x\aspect_y\aspect_z}{(2\pi)^3} \int_\mathcal{D} V(\xb,t)e^{-i\jb \cdot \xb }\mathrm{d}\xb, \\
	\omb_\jb &=\frac{\aspect_x\aspect_y\aspect_z}{(2\pi)^3} \int_\mathcal{D} \Omega(\xb,t)e^{-i\jb \cdot \xb}\mathrm{d}\xb ,
	\end{split}
\]
where $\vb_\jb(t),\omb_\jb(t)\in\mathbb{C}^3$. The inverse transform is 
\begin{equation}\label{eq:3Dfourier}
	\begin{split}
	V      &=\sum_{\jb\in D} \vb_\jb(t)e^{i\jb \cdot\xb},\\
	\Omega &=\sum_{\jb\in D} \omb_\jb(t)e^{i\jb\cdot\xb}.
	\end{split}
\end{equation}
Since $V$ and $\Omega$ are real, $\vb_{-\jb}=\overline{\vb}_\jb$ and $\omb_{-\jb}=\overline{\omb}_\jb$. 

The divergence-free conditions \eqref{eq:divergence} imply that for all $j \in D$,
\begin{equation}
	\label{eq:divfree}
	\jb\cdot \vb_\jb =0, \;\;\jb\cdot \omb_\jb =0.
\end{equation}
Thus, all the dynamics of the Euler equations occur on the \emph{divergence-free subspace}
\begin{equation}
\label{eq:divfreess}
	\mathcal{M}=\{ \omb_\jb \in\Cb^3 \;\; | \;\;\jb\in D, \;  \omb_{-\jb} = - \bar\omb_{\jb} , \; \jb\cdot\omb_\jb=0 \}.
	%	Div_0 =\{ \omb_\jb \; | \;   \jb\cdot \omb_\jb=0 \}.
\end{equation}
When $\vb$ is divergence free, this space is invariant under the Euler equations \eqref{eq:omvpde}.

To write \eqref{eq:omvpde} in Fourier space, it is convenient to define an antisymmetric matrix, the cross-product matrix, for a vector $\ab =  (a_x,a_y, a_z)^\transp$ by
\begin{equation}
	\crossmat{\ab}=\begin{pmatrix}
			0 & -a_z  & a_y \\
			a_z & 0 & -a_x \\
			-a_y & a_x & 0
			\end{pmatrix} .
\end{equation}
The definition is made so that for any vectors $\ab, \bb\in\Rb^3$, $\crossmat{\ab}\bb=\ab\times\bb$. 
%Note that $\widehat{\ab}$ is antisymmetric.
Thus for example, the condition $\Omega=\nabla \times V$ implies
\begin{equation}
\label{eq:omvv}
	\omb_\jb=i\jb\times \vb_\jb = i \crossmat{\jb} \vb_\jb.
\end{equation}
The properties of the cross product imply that for any invertible $3\times3$ matrix $M$,  $M\ab\times M\bb=\text{det}(M)M^{-\transp}(\ab\times\bb)$. Thus we have 
\[
	\crossmat{M\ab}=\text{det}(M)M^{-\transp}\crossmat{\ab}M^{-1}.
\]
In the special case of a rotation matrix $R\in SO(3)$, this reduces to $\crossmat{R\ab}=R\crossmat{\ab}R^{\transp}$.

We wish to write \eqref{eq:omvpde} as an infinite set of differential equations for the dynamics of the vorticity modes $\omb_\jb$ only. Combining \eqref{eq:divfree} and \eqref{eq:omvv} implies
\[
	\jb\times\omb_\jb=-i |\jb|^2\vb_\jb.
\]
Thus from \eqref{eq:omvv}, for all $\jb\neq \mathbf{0}$,
we can take the divergence-free inverse to the curl
\begin{equation}
	\label{eq:velocitymodes}
	\vb_\jb=i\frac{\jb\times\omb_\jb}{|\jb|^2}.
\end{equation}
In addition we assume that we are in a frame of reference where $\vb_\mathbf{0} = \mathbf{0}$. 
Note also that $\omb_\mathbf{0} = \mathbf{0}$ by \eqref{eq:omvv}.
%so that we can set $\omb_{\mathbf{0}}=0$ as well.
This allows us to formally write $V=(\nabla\times)^{-1}\Omega$ on the divergence-free subspace.

Since $\rho=1$, the energy per unit volume becomes
\begin{equation}
\begin{split}
	E&=\frac{\aspect_x\aspect_y\aspect_z}{2 (2\pi)^3} \int_\mathcal{D} V\cdot V\mathrm{d}\xb  
%		&=\frac{\aspect_x\aspect_y\aspect_z}{(2\pi)^3} 
%			\int \sum_{\jb,\kb\in D} \vb_\jb\cdot \vb_\kb e^{i(\jb+\kb)\cdot \xb}\mathrm{d}\xb \\
		=\frac12 \sum_{\jb\in D} \vb_\jb\cdot \vb_{-\jb} \\
		&=\frac12 \sum_{\jb \in D \setminus \zeroS} \frac{\omb_\jb \cdot \omb_{-\jb}}{|\jb|^2} , 
\end{split}
\label{eq:energy}
\end{equation}
where we used \eqref{eq:divfree} and \eqref{eq:velocitymodes} for the last step.

Now \eqref{eq:velocitymodes} implies that
\[
 \begin{split}
	(\Omega\cdot\nabla)V & =\sum_{\jb,\kb\in D\setminus \zeroS }i(\jb\cdot\omb_\kb) \vb_\jb e^{i(\jb+\kb)\cdot\xb} \\
		&=-\sum_{\jb,\kb \in D\setminus \zeroS }  (\jb\cdot\omb_{\kb})
		\frac{\jb\times\omb_\jb}{|\jb|^2}  e^{i(\jb+\kb)\cdot\xb} .
\end{split}
\] 
Similarly 
\[
 \begin{split}
	(V\cdot\nabla)\Omega& =\sum_{\jb,\kb\in D\setminus \zeroS}i(\jb\cdot\vb_{\kb})
		\omb_\jb e^{i(\jb+\kb)\cdot\xb } \\ 
		&=-\sum_{\jb,\kb\in D\setminus \zeroS} \frac{1}{|\kb|^2}\jb\cdot(\kb\times\omb_\kb)
			\omb_\jb e^{i(\jb+\kb)\cdot\xb}.
\end{split}
\]
Thus rewriting \eqref{eq:omvpde}
in Fourier space leads to the system of ODEs
\begin{equation} \begin{split}
\label{eq:3Dvorticityode}
	\dot{\omb}_\jb&
		=\sum_{\kb+\lb=\jb} \frac{1}{|\kb|^2}\left [ \lb\cdot (\kb\times \omb_{\kb}) \omb_{\lb}
		 -(\kb\cdot  \omb_\lb) \kb\times \omb_\kb \right ] \\
	 &=\sum_{\kb \in D\setminus \zeroS} A(\jb,\kb)\frac{\omb_{-\kb}}{|\kb|^2} ,
\end{split}
\end{equation}
where
\begin{equation}\label{eq:Adefine}
	A(\jb,\kb)=  \omb_{\jb+\kb}(\kb\times\jb)^{\transp}
		- (\kb\cdot \omb_{\jb+\kb} )\crossmat{\kb} .
\end{equation}
Note that since $\vb_\mathbf{0} = \omb_\mathbf{0} = \mathbf{0}$, the term $\kb = \mathbf{0}$ has been omitted in the summation so there is no singularity in \eqref{eq:3Dvorticityode}. 
%Moreover, the divergence free subspace \eqref{eq:divfreess} is an invariant subspace of \eqref{eq:3Dvorticityode}.
Thus we derived an infinite set of ordinary differential equation for the vorticity modes that does not depend on velocity.

%%%%%%%%%%%%%%%%%%%
%%%%%% Poisson
%%%%%%%%%%%%%%%%%%%
\section{Poisson Structure}

The Euler equations possess a Poisson structure given by 
\begin{equation}\label{eq:FunctionalPoisson}
    \{ F, G \} = \int d^3 x \, \omb \cdot \left( \left( \nabla \times \frac{\delta F}{\delta \omb} \right) \times 
    							\left( \nabla \times \frac{\delta G}{\delta \omb} \right) \right) ,
\end{equation}
on the set of divergence-free vorticity fields $\omb$, see, e.g., \cite{Zakharov97}.
Using this Poisson structure the equations of motion can be written 
$ \omb_t = \{\omb, E\}$ where $E$ is the kinetic energy of the fluid as defined in \eqref{eq:energy}.
We wish to derive a Poisson formulation for the Euler equations in Fourier space \eqref{eq:3Dvorticityode}.
As they stand they do not define a Poisson structure; for example, $A$ is not antisymmetric:
$A(\jb, \kb)^T  \not = -A(\kb, \jb)$.

To obtain a Poisson bracket for the Fourier space Euler equations \eqref{eq:3Dvorticityode} we formally rewrite  
\eqref{eq:FunctionalPoisson} in Fourier space and infer that $J(\jb, \kb)$ is defined via
\[
   \{ f, g \} = -\sum_{\jb, \kb \in D} \omega_{\jb + \kb} \cdot \left( \left( \jb \times \frac{\partial f}{\partial \omb_\jb} \right) \times \left( \kb \times \frac{\partial g}{\partial \omb_\kb} \right) \right)
= \sum_{\jb,\kb\in D} \left(\frac{\partial f}{\partial \omb_\jb}\right)^{\transp} J(\jb,\kb) \left(\frac{\partial g}{\partial \omb_\kb}\right) .
\]
Extracting $J$ from this definition gives
\begin{equation}
\label{eq:BDef}
   J(\jb, \kb) :=   \omb_{\jb + \kb} (\kb \times \jb)^T +( \jb \cdot \omb_{\jb + \kb}) \crossmat{ \kb  }  \,.
\end{equation}
We notice that $A$ and $J$ are different, 
\[
 J(\jb, \kb) - A(\jb, \kb) = ((\jb + \kb) \cdot \omb_{\jb + \kb} ) \crossmat{ \kb } \,,
\]
but their difference vanishes on the divergence-free subspace since then $(\jb+\kb) \cdot \omb_{\jb+\kb} = 0$.
Hence on the subspace $\Mcal$ defined in \eqref{eq:divfreess}, $A(\jb,\kb)=J(\jb,\kb)$ and so for initial conditions in this space, the Euler equations \eqref{eq:3Dvorticityode} can be written as
\begin{equation}
	\label{eq:3dodeB}
	\dot{\omb_\jb}=\sum_{\kb \in D  \setminus \zeroS} J(\jb,\kb)\frac{\omb_{-\kb}}{|\kb|^2}.
\end{equation}

Notice that $A$ and $J$ are matrix valued functions depending on two lattice vectors $\jb, \kb$ and a vorticity Fourier component $\omb_{\jb + \kb}$, but the dependence on the latter is suppressed in the notation.

By writing the differential equations in the form \eqref{eq:3dodeB}, the Euler equations  on a three-dimensional periodic domain 
with divergence-free vorticity are a Poisson system.  To show this we first obtain a simple lemma.

\begin{lemma}
\label{lem:JProperties}
The matrix $J$ satisfies $J(\jb,\kb) = -J(\kb,\jb)^\transp$,   $J(\jb,\kb) \kb = 0$, and $ \jb^\transp J(\jb,\kb) = 0 $.
\end{lemma}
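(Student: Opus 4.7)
The plan is to verify all three identities directly from the definition \eqref{eq:BDef}, using only elementary vector algebra; no structural or coordinate argument is needed. The right-nullspace identity $J(\jb,\kb)\kb = \mathbf{0}$ is immediate: the rank-one term contributes $\omb_{\jb+\kb}\bigl((\kb\times\jb)\cdot\kb\bigr) = \mathbf{0}$ since $\kb\times\jb$ is perpendicular to $\kb$, and the skew term contributes $(\jb\cdot\omb_{\jb+\kb})(\kb\times\kb) = \mathbf{0}$. So I would dispose of this case first in one line.

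For the antisymmetry $J(\jb,\kb) = -J(\kb,\jb)^\transp$, I would first transpose the definition of $J(\kb,\jb)$, using $(uv^\transp)^\transp = vu^\transp$ and $\crossmat{\jb}^\transp = -\crossmat{\jb}$, to obtain
\[
  -J(\kb,\jb)^\transp = (\kb\times\jb)\omb_{\jb+\kb}^\transp + (\kb\cdot\omb_{\jb+\kb})\crossmat{\jb}.
\]
The claim then reduces to
\[
  \omb_{\jb+\kb}(\kb\times\jb)^\transp - (\kb\times\jb)\omb_{\jb+\kb}^\transp = (\kb\cdot\omb_{\jb+\kb})\crossmat{\jb} - (\jb\cdot\omb_{\jb+\kb})\crossmat{\kb}.
\]
The key tool is the matrix form of the BAC--CAB rule, $\crossmat{\ab\times\bb} = \bb\ab^\transp - \ab\bb^\transp$, which follows from $(\ab\times\bb)\times\cb = \bb(\ab\cdot\cb) - \ab(\bb\cdot\cb)$. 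Applied with $\ab = \omb_{\jb+\kb}$ and $\bb = \kb\times\jb$, the left-hand side becomes $-\crossmat{\omb_{\jb+\kb}\times(\kb\times\jb)}$; expanding the inner double cross product by BAC--CAB once more yields $(\jb\cdot\omb_{\jb+\kb})\kb - (\kb\cdot\omb_{\jb+\kb})\jb$, and taking $\crossmat{\cdot}$ of this gives exactly the right-hand side.

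For the left-nullspace identity $\jb^\transp J(\jb,\kb) = 0$, the cleanest route is to combine the antisymmetry just established with the right-nullspace property:
\[
  \jb^\transp J(\jb,\kb) = -\jb^\transp J(\kb,\jb)^\transp = -\bigl(J(\kb,\jb)\jb\bigr)^\transp = 0,
\]
where the final equality applies the second identity with the roles of $\jb$ and $\kb$ swapped. I do not expect a genuine obstacle here, since the lemma is pure vector algebra; the only place requiring care is sign bookkeeping in the transpose and double-BAC--CAB reductions of the antisymmetry step.
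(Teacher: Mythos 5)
Your proposal is correct and takes essentially the same route as the paper: a direct verification of all three identities from the definition \eqref{eq:BDef}, with the right kernel immediate, the left kernel a one-liner, and the antisymmetry following from the fact that the two non-antisymmetric terms combine into an antisymmetric sum. The paper merely asserts that last fact, whereas you supply the justifying computation via $\crossmat{\ab\times\bb}=\bb\ab^\transp-\ab\bb^\transp$ and BAC--CAB, and your signs all check out.
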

\begin{proof}
Neither term in \eqref{eq:BDef} is antisymmetric under the exchange of  $\jb$ and $\kb$ and transposition, but their sum is.
The vector $\kb$ is clearly in the right kernel of $J$, while $\jb$ is in the left kernel since $\jb^T \crossmat{ \kb } = \jb \times \kb$.
Note in particular that these results are valid for arbitrary values of $\omb_{\jb+\kb}$. 
%% To prove the first, use the Vector Identity 
%% $[A\cdot (B \times C)] D = (A\cdot D) B\times C + (B \cdot D)(C \times A) + (C \cdot D) A \times B$
%% with $B = k$, $C = j$, $D = \omega$, and $A$ an arbitrary vector
\end{proof}

\begin{theorem}[Simple Poisson Structure on the divergence free subspace]
\label{thm:PS3D}
The dynamics of the three-dimensional Euler equations for an incompressible, inviscid flow on a periodic domain 
are a Poisson system on the divergence-free subspace $\mathcal M$ with bracket
\begin{equation}
	\label{eq:3DBracketPROOF}
	\{f,g\}=\sum_{\jb,\kb\in D} \left(\frac{\partial f}{\partial \omb_\jb}\right)^{\transp} J(\jb,\kb) \left(\frac{\partial g}{\partial \omb_\kb}\right) ,
\end{equation}
where $J(\jb,\kb)$ is given by  \eqref{eq:BDef},
and Hamiltonian
\begin{equation}
	\label{eq:3DHamiltonianPROOF}
	H=\frac{1}{2}\sum_{\jb\in D \setminus \zeroS }\frac{\omb_{-\jb}\cdot\omb_{\jb}}{|\jb|^2} \,.
\end{equation}
The physically relevant, divergence-free subspace $\mathcal M$ given by \eqref{eq:divfreess}
 is an invariant subspace of the dynamics.
\end{theorem}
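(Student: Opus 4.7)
The plan is to establish three things in order: (i) the bilinear form in \eqref{eq:3DBracketPROOF} is a genuine Poisson bracket, (ii) the Hamiltonian vector field it generates for the $H$ in \eqref{eq:3DHamiltonianPROOF} reproduces \eqref{eq:3Dvorticityode} on $\mathcal{M}$, and (iii) $\mathcal{M}$ is invariant under this flow.

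Bilinearity and the Leibniz rule are immediate from the shape of \eqref{eq:3DBracketPROOF}, since each $\partial/\partial\omb_\jb$ is a derivation acting on the factors $f$ and $g$. Antisymmetry reduces, after relabelling $\jb \leftrightarrow \kb$, to $J(\jb,\kb) = -J(\kb,\jb)^\transp$, which is supplied by Lemma~\ref{lem:JProperties}. For the equations of motion I first compute $\partial H / \partial \omb_\kb = \omb_{-\kb} / |\kb|^2$ for $\kb \in D\setminus\zeroS$, treating $\omb_\kb$ and $\omb_{-\kb}$ as formally independent coordinates and using the symmetry of the sum in \eqref{eq:3DHamiltonianPROOF}. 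Taking $f = \omb_\jb$ collapses one index via $\partial\omb_\lb/\partial\omb_\jb = \delta_{\jb\lb} I$, yielding precisely \eqref{eq:3dodeB}, which coincides with \eqref{eq:3Dvorticityode} on $\mathcal{M}$ because $J = A$ there.

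Invariance of $\mathcal{M}$ requires that $\jb\cdot\omb_\jb = 0$ is preserved for each $\jb \in D\setminus \zeroS$ along the flow; the reality condition is preserved by the analogous conjugation symmetry of $J$ under $(\jb,\kb)\mapsto(-\jb,-\kb)$. By Lemma~\ref{lem:JProperties} the flow satisfies
\[
\frac{d}{dt}(\jb\cdot\omb_\jb) = \jb^\transp \sum_{\kb\in D\setminus\zeroS} J(\jb,\kb)\frac{\omb_{-\kb}}{|\kb|^2} = 0,
\]
because $\jb^\transp J(\jb,\kb) = 0$ identically, independently of the value of $\omb_{\jb+\kb}$.

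The main obstacle is the Jacobi identity. Since $J(\jb,\kb)$ depends linearly on $\omb_{\jb+\kb}$, the cyclic sum $\{f,\{g,h\}\} + \text{cyclic}$ collapses into a finite trilinear expression in the partial derivatives of $f$, $g$, $h$, whose coefficient for each ordered triple of modes $(\jb,\kb,\pb)$ with $\jb+\kb+\pb = \mathbf{0}$ is an $\omb$-independent tensor built from cross products. My plan is to verify that this tensor vanishes; I expect it to follow from the Jacobi identity for the cross product, together with standard triple-product identities and the antisymmetry already supplied by Lemma~\ref{lem:JProperties}. A more conceptual alternative is to recognise \eqref{eq:3DBracketPROOF} as the Fourier-mode form of the classical Lie--Poisson bracket \eqref{eq:FunctionalPoisson} on divergence-free vector fields, whose Jacobi identity is well known \cite{Zakharov97, Morrison98}, and to check that Fourier substitution is a bracket homomorphism on $\mathcal M$. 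Either route benefits from restriction to $\mathcal M$, where the $(\jb+\kb)\cdot\omb_{\jb+\kb}$ terms that distinguish $J$ from $A$ drop out and considerably simplify the bookkeeping.
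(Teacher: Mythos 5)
Your proposal tracks the paper's proof almost step for step: antisymmetry from $J(\kb,\jb)^\transp=-J(\jb,\kb)$, bilinearity and Leibniz from the form of the bracket, the computation $\partial H/\partial\omb_\kb=\omb_{-\kb}/|\kb|^2$ recovering \eqref{eq:3dodeB} (hence \eqref{eq:3Dvorticityode} on $\mathcal M$ where $J=A$), and invariance of $\mathcal M$ from $\jb^\transp J(\jb,\kb)=0$. All of that is correct and is exactly what the paper does.

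The gap is the Jacobi identity, which you leave as a plan rather than a proof, and the plan as stated rests on a false expectation. You write that the cyclic sum collapses to an $\omb$-independent tensor that you ``expect'' to vanish from cross-product identities. It does not vanish: the cyclic sum $Z_{\alpha,\beta,\gamma}(\ib,\jb,\kb)$ in \eqref{eq:Jacobi} is linear in $\omb_{\ib+\jb+\kb}$, and the paper's (admittedly tedious) computation shows its nonzero entries are proportional to $(\ib+\jb+\kb)\cdot\omb_{\ib+\jb+\kb}$ --- so the Jacobi identity genuinely fails off the divergence-free subspace and holds only on $\mathcal M$. This is not a bookkeeping convenience; it is the reason the theorem is stated only on $\mathcal M$, the reason the bracket is called ``tainted,'' and the motivation for the projected structure $\mathcal J$ in Theorem~\ref{thm:Jprojected}. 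You do gesture at restriction to $\mathcal M$ at the end, but you attribute the role of the divergence-free condition to the difference between $J$ and $A$ (the terms $(\jb+\kb)\cdot\omb_{\jb+\kb}$), which is irrelevant here: the Jacobi identity is a property of $J$ alone, and the obstruction involves the three-index combination $(\ib+\jb+\kb)\cdot\omb_{\ib+\jb+\kb}$. Your alternative route via the functional bracket \eqref{eq:FunctionalPoisson} carries the same hidden subtlety, since that bracket is only defined on divergence-free fields and its Jacobi identity off that constraint set is precisely the issue studied in \cite{Chandre13}. To close the gap you must either carry out the componentwise computation of $Z$ and exhibit the factor $(\ib+\jb+\kb)\cdot\omb_{\ib+\jb+\kb}$, or give a structural argument for why the residual must take that form.
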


\begin{proof}
For the Hamiltonian \eqref{eq:3DHamiltonianPROOF}, the differential equations for $\omb_\jb$ are given by 
\eqref{eq:3dodeB} which, on the divergence-free subspace $\mathcal M$, is the same as 
\eqref{eq:3Dvorticityode}.

To be a Poisson bracket, \eqref{eq:3DBracketPROOF} must be  antisymmetric, bilinear, satisfy the Leibniz rule, and satisfy the Jacobi identity \cite{Morrison82,Olver00,Meiss17}. The first property follows easily since
\[ \begin{split}
	\{g,f\}&=\sum_{\kb,\jb\in D} \left(\frac{\partial g}{\partial \omb_\kb}\right)^{\transp} J(\kb,\jb) \left(\frac{\partial f}{\partial \omb_\jb}\right) \\
		&=-\sum_{\kb,\jb\in D} \left(\frac{\partial g}{\partial \omb_\kb}\right)^{\transp} J(\jb,\kb)^{\transp} \left(\frac{\partial f}{\partial \omb_\jb} \right) \\
		&=-\sum_{\kb,\jb\in D} \left (\left(\frac{\partial f}{\partial \omb_\jb}\right)^{\transp} J(\jb,\kb) \left(\frac{\partial g}{\partial \omb_\kb}\right ) \right )^{\transp} \\
		&=-\{f,g\} ,
\end{split}\]
as $J(\kb,\jb)^{\transp} = -J(\jb,\kb)$  by Lemma \ref{lem:JProperties}. The bracket is also bilinear as the bracket is linear in the partial derivatives of $f$ and $g$.  The Leibniz rule holds due the bracket being linear in the derivative of $f$.

The final condition is the Jacobi identity, $\{f,\{g,h\}\} + \{g,\{h,f\}\} + \{h,\{f,g\}\} = 0$. For a bracket of the form \eqref{eq:3DBracketPROOF}, this can be reduced to a condition on the structure matrix, see \cite[Theorem 2]{Morrison82} and \cite[Eq. (6.15)] {Olver00}, which becomes $Z_{\alpha, \beta, \gamma}(\ib,\jb,\kb) = 0$ where
\begin{equation}\label{eq:Jacobi}
\begin{split}
	Z_{\alpha, \beta, \gamma}(\ib,\jb,\kb) =&\sum_{\delta\in\{x,y,z\},\lb\in D}
	\bigg [ J(\ib,\lb)_{\alpha,\delta} \frac{\partial J(\jb,\kb)_{\beta,\gamma}}{\partial (\omega_{\lb})_\delta}\\
	 &\;\;+J(\kb,\lb)_{\gamma,\delta} \frac{\partial J(\ib,\jb)_{\alpha,\beta}}{\partial (\omega_{\lb})_\delta}
	+J(\jb,\lb)_{\beta,\delta} \frac{\partial J(\kb,\ib)_{\gamma,\alpha}}{\partial (\omega_{\lb})_\delta} \bigg ]\\
	=&\sum_{\delta\in\{x,y,z\}}
	\bigg [ J(\ib,\jb+\kb)_{\alpha,\delta} \frac{\partial J(\jb,\kb)_{\beta,\gamma}}{\partial (\omega_{\jb+\kb})_\delta}\\
	 &\;\;+J(\kb,\ib+\jb)_{\gamma,\delta} \frac{\partial J(\ib,\jb)_{\alpha,\beta}}{\partial (\omega_{\ib+\jb})_\delta}
	+J(\jb,\kb+\ib)_{\beta,\delta} \frac{\partial J(\kb,\ib)_{\gamma,\alpha}}{\partial (\omega_{\kb+\ib})_\delta} \bigg ] ,
\end{split}
\end{equation}
for all $\alpha,\beta,\gamma\in \{x,y,z\}$ and $\ib,\jb,\kb\in D$. Here we have used $x,y,z$-subscripts to denote the components, e.g., 
\[
\omb_\jb=((\omega_\jb)_x,(\omega_\jb)_y,(\omega_\jb)_z),
\]
and similarly for the matrix $J$. 
By a tedious computation one can show that each non-zero entry of $Z_{\alpha, \beta, \gamma}$ is 
proportional to $(\ib+\jb+\kb) \cdot \omb_{\ib+\jb+\kb}$ and hence vanishes on $\mathcal M$.

Finally, the subspace $\mathcal M$ is an invariant subspace because Lemma~\ref{lem:JProperties} gives
\begin{equation}\label{eq:InvSubSpace}
\frac{\mathrm{d}}{\mathrm{d} t}  (\jb \cdot \omb_\jb) = \{ \jb \cdot \omb_\jb , H \} = \sum_{\jb, \kb} \jb^\transp J(\jb, \kb) \frac{\partial H}{\partial \omb_k} = 0 ,
\end{equation}
assuming $ \jb \cdot \omb_\jb  = 0$ for all $\jb$ .

Thus the system given by \eqref{eq:3DBracketPROOF}, \eqref{eq:3DHamiltonianPROOF} and \eqref{eq:divfreess} is a Poisson system, and generates the dynamics of the Euler equations on a three-dimensional periodic domain.
\end{proof}

The Fourier space Poisson structure \eqref{eq:BDef} is  ``tainted'' in the terminology of \cite{Chandre13}: 
it does not satisfy the Jacobi identity in full space obtained by dropping the 
divergence-free condition from $\mathcal M$. 
However, following the ideas of \cite{Chandre13} it is simple to modify the Poisson structure $J$ 
so that is satisfies the Jacobi identity on the full space using the projector $\omb \mapsto \omb - \nabla \Delta^{-1} \nabla \omb$
which in Fourier space becomes 
\[
{\mathcal P} \omb_{\jb} := -\jb \times ( \jb \times \omb_\jb) \frac{1}{|\jb|^2} =  (id - \jb \jb^T / |\jb|^2) \omb_{\jb}.
\]
In the following we write $J(\jb, \kb) = J(\jb, \kb, \omb_{\jb+\kb})$ to make the dependence on $\omb_{\jb + \kb}$ 
explicit. 
\begin{theorem}[Projected Poisson Structure]
\label{thm:Jprojected}
The matrix valued function 
\[
   {\mathcal J}(\jb,\kb,\omb_{\jb+\kb}) = J( \jb, \kb, {\mathcal P} \omb_{\jb+\kb})
\]
defines a Poisson structure with bracket as in \eqref{eq:3DBracketPROOF}
not only on the divergence free subspace but for all $\omega_\jb \in \Cb^3$, $\jb \in D$. 
\end{theorem}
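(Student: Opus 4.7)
The plan is to verify that $\mathcal J$ inherits the four Poisson bracket axioms---antisymmetry, bilinearity, Leibniz, and Jacobi---from $J$. Only the Jacobi identity requires nontrivial work: antisymmetry is immediate because $\mathcal P \omb_{\jb+\kb}$ depends only on the sum $\jb+\kb$ and so is invariant under $\jb \leftrightarrow \kb$, while Lemma~\ref{lem:JProperties} holds for $J$ with arbitrary vorticity argument; bilinearity and the Leibniz rule are structural features of any bracket of the form \eqref{eq:3DBracketPROOF} and need no new verification.

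For the Jacobi identity I would compute the analogue $Z^{\mathcal J}_{\alpha,\beta,\gamma}(\ib,\jb,\kb)$ of the Jacobiator \eqref{eq:Jacobi} with $\mathcal J$ in place of $J$. Writing $\Pi(\pb) := I - \pb\pb^\transp/|\pb|^2$ for the $3\times 3$ projector normal to $\pb$, linearity of $\mathcal P$ together with the chain rule yields
\[
\frac{\partial \mathcal J(\jb,\kb)_{\beta,\gamma}}{\partial (\omega_\lb)_\delta} = \delta_{\lb,\jb+\kb}\, \sum_\epsilon \left.\frac{\partial J(\jb,\kb)_{\beta,\gamma}}{\partial (\omega_{\jb+\kb})_\epsilon}\right|_{\omb \to \mathcal P\omb} \Pi(\jb+\kb)_{\epsilon\delta}.
\]
The Kronecker delta collapses the sum over $\lb$ exactly as in the proof of Theorem~\ref{thm:PS3D}, and each of the three cyclic contractions in $Z^{\mathcal J}$ acquires a factor of the form $\mathcal J(\ib,\jb+\kb)\,\Pi(\jb+\kb)$ (with cyclic permutations for the other two terms).

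The central algebraic cancellation is that Lemma~\ref{lem:JProperties}, valid for arbitrary vorticity argument and in particular for $\mathcal P\omb_{\ib+\jb+\kb}$, gives $\mathcal J(\ib,\jb+\kb)(\jb+\kb) = 0$. Consequently $\mathcal J(\ib,\jb+\kb)\,\Pi(\jb+\kb) = \mathcal J(\ib,\jb+\kb)$, so the chain-rule projector is absorbed and $Z^{\mathcal J}$ reduces to the original $Z$ of Theorem~\ref{thm:PS3D} evaluated with $\omb$ replaced by $\mathcal P\omb$ throughout. That proof established that every nonzero entry of $Z$ is proportional to $(\ib+\jb+\kb)\cdot\omb_{\ib+\jb+\kb}$; after the substitution this factor becomes $(\ib+\jb+\kb)\cdot \mathcal P\omb_{\ib+\jb+\kb}$, which vanishes identically by the defining orthogonality of $\mathcal P$. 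Hence Jacobi holds for all $\omb_\jb \in \Cb^3$, not merely on $\mathcal M$.

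The main obstacle I anticipate is the bookkeeping in the chain-rule step: one must carefully track the projector produced by differentiating $\mathcal P\omb_{\jb+\kb}$ and verify, using the left and right kernels of $J$ from Lemma~\ref{lem:JProperties}, that it is absorbed in each of the three cyclic terms. Once that absorption is in hand, the Jacobi identity follows mechanically from the calculation already carried out on $\mathcal M$ and from the fact that $(\jb+\kb)^\transp \mathcal P\omb_{\jb+\kb} = 0$ by construction.
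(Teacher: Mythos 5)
Your proof is correct and follows essentially the same route as the paper's: antisymmetry and bilinearity are inherited from $J$, and the Jacobi identity holds because the nonzero entries of the Jacobiator, each proportional to $(\ib+\jb+\kb)\cdot\omb_{\ib+\jb+\kb}$, become proportional to $(\ib+\jb+\kb)\cdot\mathcal{P}\omb_{\ib+\jb+\kb}=0$. If anything you are more careful than the paper, which appeals only to ``linearity of $\mathcal{J}$ in $\omb$'' and does not mention the chain-rule projector $\Pi(\jb+\kb)$; your explicit absorption of that projector via $\mathcal{J}(\ib,\jb+\kb)(\jb+\kb)=0$ from Lemma~\ref{lem:JProperties} fills in a detail the paper leaves implicit.
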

\begin{proof}
Antisymmetry and bilinearity are inherited from $J$. It remains to check the 
Jacobi identity. Previously we noted that all non-zero terms in \eqref{eq:Jacobi} are proportional to $\lb \cdot \omb_\lb$ where $\lb = {\ib + \jb + \kb}$.
By linearity of $\mathcal J$ in $\omb$, this becomes $\lb \cdot {\mathcal P} \omb_\lb = -\lb \cdot (\lb \times ( \lb \times \omb_\lb)) = 0$ and the 
Jacobi identity holds.
\end{proof}

It may seem as if the functions $\jb \cdot \omb_\jb$ are Casimirs of $J$; indeed if $J$ were to 
satisfy the Jacobi identity in the full space then  this would be the case by Lemma~\ref{lem:JProperties}. 
However, outside the subspace where $\jb \cdot \omb_\jb  =0 $ for all $\jb \in D$ the Jacobi identity 
does not hold for $J$, so it is not a Poisson structure outside of the manifold $\mathcal{M}$. By contrast, for $\mathcal{J}$ the 
Jacobi identity is satisfied everywhere and  the functions  $\jb \cdot \omb_\jb$ are true Casimirs. 
Recall that a function $f$ is a Casimir of a bracket if $\{f,g\} = 0$ for any function $g$. 
If $f$ is a Casimir then it is an invariant for  \textit{any} dynamics generated by the bracket, 
independent of the choice of Hamiltonian. For the modified bracket $\mathcal J$ any subspace defined 
by setting the Casimirs $\lb \cdot \omb_\lb  $ a constant is an invariant subspace. 
In particular the divergence-free subspace $\mathcal M$ defined in \eqref{eq:divfreess} is 
such an invariant subspace defined by setting all Casimirs to zero. 

In the next section we will explicitly reduce the Poisson structure to the divergence-free subspace by employing a local transformation that allows for the explicit projection and restriction to the invariant subspace. This reduction works both for $J$ and $\mathcal J$, 
since they are equal on the invariant subspace \eqref{eq:divfreess}.

%%%%%%%%%%%
%%%% Divergence Free Subspace
%%%%%%%%%%%
\section{Poisson Structure on the Divergence-Free Subspace}\label{sec:divfreess}

We wish to explicitly restrict our system to the manifold $\mathcal{M}$ \eqref{eq:divfreess}, the divergence-free subspace where $\jb\cdot\omb_\jb=0$ for all $\jb$. We will write the Poisson system as a two-dimensional, unrestricted system, thereby reducing the size of the system by one-third.

To do so, we introduce a rotation matrix $R_\jb \in SO(3)$ for each $\jb\in\Rb^3$ 
%such that $\text{det}(R_{\jb })=1$, $R_{\jb }^{\transp}R_{\jb }=R_{\jb }R_{\jb }^{\transp}=\mathbb{I}$  
with the goal of making $\jb \cdot \omb_\jb$ the first component of a rotated vorticity; i.e., we 
rotate each $\jb \in \Rb^3$ to a vector parallel to the $x$-axis. 
In principle we could assign a different rotation matrix to each lattice point $\jb$.
Such a rotation matrix can be constructed using a non-zero vector $\nb\in\mathbb{R}^3$ and set
\begin{equation}\label{eq:Rotation}
	R_\jb = \begin{pmatrix} \jb^\transp / |\jb| \\
						    (\jb \times \nb)^\transp / |\jb \times \nb| \\
						    (\jb \times ( \jb \times \nb))^\transp /(| \jb \times \nb | | \jb |)
			\end{pmatrix} 
		= 	
 \begin{pmatrix} \jb^\transp / |\jb| \\
						    (\jb \times \nb)^\transp / |\jb|_2 \\
						    (\jb \times ( \jb \times \nb))^\transp /(| \jb  |_2 | \jb |)
			\end{pmatrix} 		
			\,,   \quad \jb \times \nb \neq 0,
\end{equation}
so that the rows of $R_\jb$ are orthonormal. Here we defined a projected ``2-norm" $|\jb|_2 = | \jb \times \nb|$.
Notice that we have 
\begin{equation}\label{eq:Sdefine}
   R_{-\jb} R_\jb^\transp = S = \mathrm{diag}( -1, -1, 1)\,.
\end{equation}
The ``signature" matrix $S$ will remain in the transformed Hamiltonian. 

In the most general case the vector $\nb$ in $R_j$ can be any vector that is not parallel to $\jb$.
Perhaps the simplest and most convenient assumption would be to choose $\nb = \hat{e}_x$, 
the unit vector in the $x$-direction.
However, when $\jb$ and $\nb$ are parallel, namely for $\jb = a \hat{e}_x$, this definition fails. 
To allow for this, we define
\[
	R_{a\nb}=\mathbb{I} , \quad R_{-a\nb} = S \mbox{ for all }a > 0.
\]
We then define a rotated vorticity by
\begin{equation}\label{eq:omcheck}
     \check{\omb}_\jb=R_{\jb }\omb_\jb ,
\end{equation}
so that $ (\check{\omb}_\jb)_x \propto \jb^{\transp} \omb_\jb = 0$  on the divergence-free subspace. Note that this definition implies the reality condition $\check{\omb}_{-\jb} =  S \overline{ \check{\omb}}_\jb$.

Instead of this discrete choice for $\nb$, one may try to define the special cases for $R_\jb$ by a limit where $\nb$ tends to a vector parallel to $\jb$. However, the result  would then depend on how the limit is taken, and this is why we resorted to the definition of $R_{\pm a\nb}$ above.

It might have seemed convenient to choose a matrix so that
$R_{-\jb}$ and $R_\jb^\transp$ are inverses of each other so that $S$ would become the identity matrix, unlike the relation \eqref{eq:Sdefine}. In this case $H$, \eqref{eq:3DHamiltonianPROOF}, would be invariant under the transformation \eqref{eq:omcheck}. 
Indeed it is possible to change the definition of $R_\jb$ to make $S = \mathbb{I}$. 
For this one has to divide the set $D$ into ``positive" and ``negative" lattice 
points, and use this sign in the definition of $R_\jb$; see \cite{Worthington17} for the details.

Since the lattice $D$, \eqref{eq:lattice}, is discrete, one could in principle
use a different $R_\jb$ for each lattice point, as long as the first row is proportional to $\jb$.
Specifically, each $R_\jb$ could include an additional rotation about the $\jb$-axis
by an angle that depends on $\jb$. For simplicity,  we assume here that all these angles
are equal to 0.

Since the first row of $R_\jb$ is in the $\jb$ direction, the first element of $\check{\omb}_\jb$ must be zero on the divergence-free subspace, we can project down to this subspace and still capture the full dynamics. 

\begin{proposition}[Restricted Poisson structure]
\label{prop:redstr}
Define $\tilde{\omb}_\jb = (\check{\omega}_{\jb,y}, \check{\omega}_{\jb,z})$ and the reduced manifold
\begin{equation}
	\tilde{\mathcal{M}}=\{ \tilde{\omb}_\jb\in\Cb^2 \;|\; \tilde{\omb}_{-\jb} = \tilde S \overline{ \tilde{\omb}}_\jb \; ,\; \jb\in D\},
\end{equation}
with $\tilde S = \mathrm{diag}( -1, 1)$.
The Poisson Structure of Theorem \ref{thm:PS3D} reduces to
\begin{equation}
	\label{eq:3DBracketReduced}
	\{f,g\}=\sum_{\jb,\kb\in D} \left(\frac{\partial f}{\partial \tilde{\omb}_\jb}\right)^{\transp} \tilde{J}(\jb,\kb) \left(\frac{\partial g}{\partial \tilde{\omb}_\kb}\right)
\end{equation}
for a matrix valued function $\tilde{J}: \tilde{\mathcal{M}}^* \to \tilde{\mathcal{M}}$ which depends linearly on $\tilde{\omb}_{\jb}$. 
The new Hamiltonian becomes
\begin{equation}
	\label{eq:3DHamiltonianReduced}
%	\tilde{H}=\frac{1}{2}\sum_{\jb\in D \setminus \zeroS }\frac{\tilde{\omb}_{-\jb}\cdot\tilde{\omb}_{\jb}}{|\jb|^2}
	\tilde{H}=
%	\frac{1}{2}\sum_{\jb\in D \setminus \zeroS }\frac{1}{|\jb|^2}
%			{\tilde{\omb}_{-\jb}^\transp R_{-\jb} R_{\jb}^\transp \tilde{\omb}_{\jb}}
	\frac{1}{2}\sum_{\jb\in D \setminus \zeroS }\frac{1}{|\jb|^2}
			{\tilde{\omb}_{-\jb}^\transp \tilde S \tilde{\omb}_{\jb}}	.		
\end{equation}
Then the dynamics of 
\begin{equation}\label{eq:projode}
	\dot{\tilde{\omb}}_\jb=\sum_{\kb\neq \mathbf{0}}\tilde{J}(\jb,\kb)  \tilde S \frac{\tilde{\omb}_{-\kb}}{|\kb|^2}
\end{equation}
are equivalent to the dynamics of \eqref{eq:3dodeB} on the divergence-free subspace.
\end{proposition}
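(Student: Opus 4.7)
The plan is to treat the map $\omb_\jb \mapsto \check{\omb}_\jb = R_\jb \omb_\jb$ as a pointwise linear change of coordinates on the Fourier phase space, so that the divergence-free constraint $\jb \cdot \omb_\jb = 0$ becomes the vanishing of the single component $(\check{\omega}_\jb)_x$. Dropping that component then yields a system for the surviving two components $\tilde{\omb}_\jb$, and both the bracket and the Hamiltonian descend accordingly.

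First I would transform the bracket \eqref{eq:3DBracketPROOF}. Because $R_\jb$ is orthogonal and independent of $\omb$, the chain rule gives $\partial f / \partial \omb_\jb = R_\jb^\transp \partial f / \partial \check{\omb}_\jb$, so the bracket rewrites with a rotated structure matrix $\check{J}(\jb,\kb) := R_\jb J(\jb,\kb) R_\kb^\transp$. The key observation is that the first row of $R_\jb$ is $\jb^\transp / |\jb|$ while the first column of $R_\kb^\transp$ is $\kb / |\kb|$. By Lemma \ref{lem:JProperties}, $\jb^\transp J(\jb,\kb) = 0$ and $J(\jb,\kb) \kb = 0$, so both the first row and first column of $\check{J}$ vanish, and
\begin{equation*}
    \check{J}(\jb,\kb) = \begin{pmatrix} 0 & \mathbf{0}^\transp \\ \mathbf{0} & \tilde{J}(\jb,\kb) \end{pmatrix},
\end{equation*}
which reduces the bracket to \eqref{eq:3DBracketReduced}. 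Linearity of $\tilde{J}$ in $\tilde{\omb}_{\jb+\kb}$ is inherited from the linearity of $J$ in $\omb_{\jb+\kb}$ since, on $\mathcal{M}$, $\omb_{\jb+\kb} = R_{\jb+\kb}^\transp (0, \tilde{\omb}_{\jb+\kb}^\transp)^\transp$.

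Next I would transform the Hamiltonian. Using \eqref{eq:Sdefine},
\begin{equation*}
    \omb_{-\jb}^\transp \omb_\jb = \check{\omb}_{-\jb}^\transp R_{-\jb} R_\jb^\transp \check{\omb}_\jb = \check{\omb}_{-\jb}^\transp S \check{\omb}_\jb,
\end{equation*}
and on $\mathcal{M}$ the $x$-component of each $\check{\omb}_\jb$ vanishes, so this equals $\tilde{\omb}_{-\jb}^\transp \tilde{S} \tilde{\omb}_\jb$, producing \eqref{eq:3DHamiltonianReduced}. Collecting the contributions from $\jb = \pm \kb$ and using symmetry of $\tilde{S}$ yields $\partial \tilde{H}/\partial \tilde{\omb}_\kb = \tilde{S} \tilde{\omb}_{-\kb} / |\kb|^2$, and Hamilton's equations with bracket \eqref{eq:3DBracketReduced} then reproduce \eqref{eq:projode}.

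Finally, equivalence with \eqref{eq:3dodeB} on $\mathcal{M}$ is immediate once the steps above are complete: for each fixed $\jb \in D$, the pair $(\check{\omega}_{\jb,x}, \tilde{\omb}_\jb)$ determines $\omb_\jb$ uniquely through the orthogonal matrix $R_\jb$, and on $\mathcal{M}$ the first coordinate is identically zero, so $\omb_\jb = R_\jb^\transp (0, \tilde{\omb}_\jb^\transp)^\transp$. The main subtlety I anticipate is bookkeeping at the lattice points with $\jb \parallel \nb$, where the formula \eqref{eq:Rotation} is singular; but because $D$ is discrete one only needs $R_\jb$ to be well-defined at each individual lattice point, which is arranged by the ad hoc definition $R_{\pm a \nb}$, so this is not a genuine obstruction.
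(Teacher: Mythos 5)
Your proposal is correct and follows essentially the same route as the paper's proof: conjugate $J$ by the rotations to get $\check J(\jb,\kb) = R_\jb J(\jb,\kb) R_\kb^\transp$, invoke Lemma~\ref{lem:JProperties} to show its first row and column vanish, drop the $x$-component, and use $R_{-\jb}R_\jb^\transp = S$ to produce the signature matrix in the Hamiltonian and in the equations of motion. The only point the paper states that you leave implicit is that antisymmetry and the Jacobi identity for $\tilde J$ are inherited from $J$ because the coordinate change is linear and the discarded row and column of $\check J$ vanish.
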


\begin{proof}
	Using the matrix \eqref{eq:Rotation} $R_\jb$ and the definition \eqref{eq:omcheck} for $\check{\omb}_\jb=(\check{\omega}_{\jb,x},\check{\omega}_{\jb,z},\check{\omega}_{\jb,y})$, the divergence-free condition implies that $\check{\omega}_{\jb,x}$ is constant and zero for all $\jb$. We first look, however, at the equations for all three components of $\check{\omb}_\jb$.

Define  
\begin{equation}
	\check{J}(\jb,\kb)  =R_\jb {J}(\jb,\kb) R_\kb^\transp 
	     =R_\jb\bigg [ (R_{\jb+\kb}^{\transp}\check{\omb}_{\jb+\kb})(\kb \times \jb)^{\transp} 
	     +( \jb\cdot( R_{\jb+\kb}^{\transp}\check{\omb}_{\jb+\kb} ))\crossmat{\kb} \bigg ] R_{\kb}^{\transp}.
\end{equation}

Then the dynamics of $\check{\omb}_\jb$ are
\begin{equation}
	\dot{\check{\omb}}_\jb=\sum_{\kb \neq 0} \check{J}(\jb,\kb)  S \frac{\check{\omb}_{-\kb}}{|\kb|^2}.
\end{equation}
Since the first column of $R_\kb^\transp$ is $\kb$ and the first row of $R_\jb$ is $\jb^\transp$, we can conclude by Lemma \ref{lem:JProperties} that when $\omb_{\jb+\kb} \in \Mcal$,
\[
	\check{J}(\jb,\kb)_{x,\beta}=\check{J}(\jb,\kb)_{\beta,x}=0, \quad \beta \in \{x,y,z\},	
\]
which is expected since the divergence-free subspace is invariant. 
This remains true in the most general case where a different vector $\nb$ is used for the matrices 
$R_\jb$, $R_\kb$, and $R_{\jb + \kb}$. We already noted after Thm.~\ref{thm:PS3D} that the conditions $\jb \cdot \omb_\jb = 0$ are invariant under the dynamics, and now we have explicitly reduced the dynamics to the invariant submanifold.
% Note that this does not imply that the divergence is constant for initial conditions that are not on the divergence-free subspace.

Thus we can ignore the dynamics of the coordinate $\check{\omega}_{\jb,x}$, and define
$\tilde{\omb}_\jb = (\check{\omega}_{\jb,y}, \check{\omega}_{\jb,z})$  and  the $2 \times 2$
Poisson matrix
\begin{equation}
	\tilde{J}(\jb,\kb)=\begin{pmatrix} \check{J}(\jb,\kb)_{y,y} & \check{J}(\jb,\kb)_{y,z} \\
			\check{J}(\jb,\kb)_{z,y} & \check{J}(\jb,\kb)_{z,z}
			\end{pmatrix}.
	\end{equation}
Note that the reality condition for $\tilde{\omb}_\jb$ becomes $\tilde{\omb}_{-\jb} = \tilde S \overline{ \tilde{\omb}}_\jb$.	
The Hamiltonian in the new coordinates is obtained by substituting $\tilde{\omb}_{\jb}$ into \eqref{eq:3DHamiltonianPROOF} to obtain \eqref{eq:3DHamiltonianReduced}.
	
The form of $\tilde{J}$ is quite complicated. 
We can directly check that it satisfies the conditions for a Poisson structure matrix (antisymmetry and Jacobi identity). 
% due to the form of the rotation matrices $R_\jb$ and the property $R_\jb=R_{-\jb}$. 
However, this also follows from the fact that the matrix $\check{J}$ has vanishing first row and column, so that the Jacobi-identity 
is inherited from the matrix $J$, in particular because the transformation to the new coordinates is linear. Similarly,
the antisymmetry is not changed by a linear transformation.

As the coordinates $\tilde{\omb}_\jb$ are no longer restricted, the Poisson manifold is the full space
$\tilde{\mathcal{M}}$. Then the Poisson bracket with structure matrix $\tilde{J}(\jb,\kb)$ and the Hamiltonian $\tilde{H}$ generate the dynamics of the Euler equations in the new coordinates $\tilde{\omb}_\jb$.
\end{proof}

Before we list the general formulas for $\tilde J$ we give the three special cases that occur when either $\jb$, $\kb$, or $\jb + \kb$ are proportional to 
$\nb = \hat{e}_x$.  For example, when $\jb = s a \nb$ the corresponding rotation matrix is replaced by $\mathbb{I}$ or $S$, depending on whether $s = 1$ or $s = -1$, respectively.
Analogous rules hold for $\kb$ and $\jb + \kb$. The results are
\[
   \jb || \nb: \quad \tilde J =  \frac{j_x}{|\jb+\kb|}  \begin{pmatrix}
       (j_x k_y \tilde{\omb}_{\jb+\kb,z}+k_z \tilde{\omb}_{\jb+\kb,y} |\jb+\kb|) s & k_z \tilde{\omb}_{\jb+\kb,z} |\kb| s \\
       j_x k_z \tilde{\omb}_{\jb+\kb,z}-k_y \tilde{\omb}_{\jb+\kb,y} |\jb+\kb| & -k_y \tilde{\omb}_{\jb+\kb,z} |\kb| 
   \end{pmatrix},
\]
\[
   \kb || \nb: \quad \tilde J =  \begin{pmatrix}
      -(j_x+k_x) (j_y \tilde{\omb}_{\jb+\kb,z}-j_z \tilde{\omb}_{\jb+\kb,y} s) & |\kb| (j_z \tilde{\omb}_{\jb+\kb,z}+j_y \tilde{\omb}_{\jb+\kb,y} s) \\
      |\jb| (j_z \tilde{\omb}_{\jb+\kb,z}+j_y \tilde{\omb}_{\jb+\kb,y} s) & 0 
   \end{pmatrix},
\]
\[
   \jb+\kb || \nb: \quad \tilde J = \frac{k_x}{|\jb+\kb|}  \begin{pmatrix}
      -(j_y k_x \tilde{\omb}_{\jb+\kb,z}+j_z \tilde{\omb}_{\jb+\kb,y} |\jb+\kb|) s & j_y \tilde{\omb}_{\jb+\kb,y} |\jb+\kb|-j_z k_x \tilde{\omb}_{\jb+\kb,z} \\
      -j_z \tilde{\omb}_{\jb+\kb,z} |\jb| s & j_y \tilde{\omb}_{\jb+\kb,z} |\jb| 
   \end{pmatrix} \,.
\]

Now we proceed to the general case where none of $\jb$, $\kb$, $\jb+\kb$ are parallel to $\nb$.
Since $\tilde{J}(\jb,\kb)$ is linear in $\tilde{\omb}_{\jb+\kb}$, and the $x$ component of $\tilde {\omb}_{\jb+\kb}$ is zero, we can write
\begin{equation}
\begin{split}
	\tilde{J}(\jb,\kb)=&\begin{pmatrix} \tilde{J}_y(\jb,\kb)_{y,y} & \tilde{J}_y(\jb,\kb)_{y,z} \\
			\tilde{J}_y(\jb,\kb)_{z,y} & \tilde{J}_y(\jb,\kb)_{z,z}
			\end{pmatrix}\tilde{\omega}_{\jb+\kb,y}\\
				&+\begin{pmatrix} \tilde{J}_z(\jb,\kb)_{y,y} & \tilde{J}_z(\jb,\kb)_{y,z} \\
			\tilde{J}_z(\jb,\kb)_{z,y} & \tilde{J}_z(\jb,\kb)_{z,z}
			\end{pmatrix}\tilde{\omega}_{\jb+\kb,z}
\end{split}
\end{equation}
where $\tilde{J}_y(\jb,\kb)_{i,j}$, $\tilde{J}_z(\jb,\kb)_{i,j}$ are functions of $\jb, \kb$ only. 

\begin{lemma}
The Poisson structure matrix $\tilde J$ after restriction to the divergence-free subspace satisfies
\begin{equation}
\label{eq:reqcond}
\begin{split}
\frac{1}{|\kb|}\tilde{J}_y(\jb,\kb)_{b,y}
+\frac{1}{|\jb+\kb|}\tilde{J}_z(\jb,-\jb-\kb)_{b,z}=0,\\
%\frac{1}{|\kb|}\tilde{J}_z(\jb,\kb)_{b,z}
%+\frac{1}{|\jb+\kb|}\tilde{J}_y(\jb,-\jb-\kb)_{b,y}=0,\\
\frac{1}{|\kb|}\tilde{J}_y(\jb,\kb)_{b,z}
+\frac{1}{|\jb+\kb|}\tilde{J}_y(\jb,-\jb-\kb)_{b,z}=0,\\
\frac{1}{|\kb|}\tilde{J}_z(\jb,\kb)_{b,y}
+\frac{1}{|\jb+\kb|}\tilde{J}_z(\jb,-\jb-\kb)_{b,y}=0,
\end{split}
\end{equation}
for $b=y$ and $b=z$.
\end{lemma}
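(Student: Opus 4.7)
The strategy is direct verification from the explicit form of $\tilde{J}$ as the $(y,z)\times(y,z)$ block of $\check{J}(\jb,\kb)=R_\jb J(\jb,\kb)R_\kb^\transp$. The identities assert specific cancellations between $\tilde{J}(\jb,\kb)$ (which depends linearly on $\tilde\omb_{\jb+\kb}$) and $\tilde{J}(\jb,-\jb-\kb)$ (which depends linearly on $\tilde\omb_{-\kb}$), weighted by $1/|\kb|$ and $1/|\jb+\kb|$ respectively. Since these weights appear as single powers of the norms rather than the $1/|\kb|^2$ that multiplies $\tilde\omb_{-\kb}$ in \eqref{eq:projode}, they must originate from the normalization $1/|\lb|$ carried by one of the rows of the rotation matrix $R_\lb$, and the identities should therefore be intrinsic algebraic properties of $\tilde{J}$ rather than consequences of energy conservation or the dynamics.

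Concretely, I would proceed in four steps. First, on the divergence-free subspace parametrize $\omb_{\jb+\kb}=R_{\jb+\kb}^\transp(0,\tilde\omega_{\jb+\kb,y},\tilde\omega_{\jb+\kb,z})^\transp$, so $\omb_{\jb+\kb}$ is a linear combination of the second and third rows of $R_{\jb+\kb}$ weighted by the two reduced components. Substituting into \eqref{eq:BDef}, $J(\jb,\kb)$ splits as $\tilde\omega_{\jb+\kb,y}M_y(\jb,\kb)+\tilde\omega_{\jb+\kb,z}M_z(\jb,\kb)$, and conjugating by $R_\jb$ on the left and $R_\kb^\transp$ on the right yields explicit closed-form expressions for $\tilde{J}_y$ and $\tilde{J}_z$. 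Second, I would compute $\tilde{J}_y(\jb,-\jb-\kb)$ and $\tilde{J}_z(\jb,-\jb-\kb)$ by substituting $\kb\mapsto-\jb-\kb$: then $R_\kb$ becomes $R_{-\jb-\kb}$ and $R_{\jb+\kb}$ becomes $R_{-\kb}$. Third, I would apply the identity $R_{-\lb}R_\lb^\transp=S$ from \eqref{eq:Sdefine} to express each translated rotation in terms of the original one, introducing the sign flips from $S=\mathrm{diag}(-1,-1,1)$ that should produce the minus sign needed in the claimed identities, while the norms in the rotation-row normalizations turn $|\kb|$ into $|\jb+\kb|$. Finally, the three identities should reduce to scalar algebraic relations among contractions of $R_\jb$-rows with $R_{\jb+\kb}$-rows, $(\kb\times\jb)$, and $\crossmat{\kb}$, which visibly cancel.

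The principal obstacle is bookkeeping: each entry $\tilde{J}_a(\jb,\kb)_{b,c}$ is a three-fold contraction of rotation-matrix rows with the tensor structure of $J$, and the sign pattern produced by $S$ under $\kb\mapsto-\jb-\kb$ must be tracked carefully, especially because the definition of $R_\jb$ bifurcates when $\jb\parallel\nb$ (handled separately, as in the three special-case formulas preceding the lemma). A slicker route I would attempt first exploits the antisymmetry $\tilde{J}(\jb,\kb)=-\tilde{J}(\kb,\jb)^\transp$ inherited from Lemma \ref{lem:JProperties}: in the bilinear form $\tilde{J}(\jb,\kb)\tilde{S}\tilde\omb_{-\kb}$ the two mode factors $\tilde\omb_{\jb+\kb}$ and $\tilde\omb_{-\kb}$ are interchanged precisely by the substitution $\kb\mapsto-\jb-\kb$, and the three identities may then follow as a single statement that the dependencies on these two modes are consistently reshuffled under this involution, with the $y/z$ index pattern dictated by the signature of $\tilde{S}=\mathrm{diag}(-1,1)$.
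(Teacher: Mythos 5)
Your primary route --- deriving explicit closed-form entries for $\tilde J_y(\jb,\kb)$ and $\tilde J_z(\jb,\kb)$ by conjugating $J$ with the rotation matrices and then verifying the cancellations under the substitution $\kb\mapsto-\jb-\kb$, with the cases $\jb$, $\kb$, or $\jb+\kb$ parallel to $\nb$ treated separately --- is exactly the paper's proof, which simply lists those explicit entries and checks \eqref{eq:reqcond} by direct computation. One caution on your proposed shortcut: the antisymmetry $\tilde J(\jb,\kb)=-\tilde J(\kb,\jb)^\transp$ is the involution $\jb\leftrightarrow\kb$, under which both sides depend on the \emph{same} mode $\tilde\omb_{\jb+\kb}$, whereas the lemma concerns the involution $\kb\mapsto-\jb-\kb$ that exchanges $\tilde\omb_{\jb+\kb}$ with $\tilde\omb_{-\kb}$, so antisymmetry alone cannot replace the computation.
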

\begin{proof}
This follows by direction computation from the explicit formulas given below.
\begin{equation}
	\tilde{J}_y(\jb,\kb)_{y,y}=
\nb \cdot  \Big(  ( \jb \times \kb) \times  ( \kb  \, |\jb |_2^2  +  \jb  \, |\kb|_2^2 )\Big)
\frac{1 }{|\jb|_2 |\kb|_2 |\jb + \kb|_2} \,,
\end{equation}
\begin{equation}
	\tilde{J}_y(\jb,\kb)_{y,z}=  -\nb \cdot  ( \jb\times \kb )  \frac{ |\jb|_2 |\kb| }{ |\jb + \kb|_2 |\kb|_2}
	% =-\frac{|\kb|}{|\jb+\kb|}\tilde{J}_y(\jb,-\jb-\kb)_{y,z},
\end{equation}
%Flips sign under $k \to -j-k$ after dividing by $|\kb|$.
\begin{equation}
	\tilde{J}_y(\jb,\kb)_{z,y}=  -\nb \cdot  ( \jb\times \kb )   \frac{ |\kb|_2 |\jb| }{ |\jb + \kb|_2 |\jb|_2} 
	%  =-\tilde{J}_y(\kb,\jb)_{y,z}  ,
\end{equation}
\begin{equation}
	\tilde{J}_y(\jb,\kb)_{z,z}=0,
\end{equation}
\begin{equation}
       \tilde{J}_z(\jb,\kb)_{y,y} = \nb \cdot (\jb \times \kb)\,  |\nb \times ( \jb \times \kb)|^2 \frac{1}{|\jb|_2 |\kb|_2 |\jb+\kb|_2 |\jb + \kb| },
\end{equation}
%Flips sign under $k \to -j-k$ after dividing by $|\kb|$.
\begin{equation}
 \tilde{J}_z(\jb,\kb)_{y,z}= - \frac{|\kb|}{|\jb+\kb|} \tilde J_y(\jb, -\jb-\kb)_{y,y},
\end{equation}
%Turns into $-\tilde J_y(\jb,\kb)_{yy}/|\kb|$ under $k \to -j-k$ after dividing by $|\kb|$.
\begin{equation}
	\tilde{J}_z(\jb,\kb)_{z,y}=-\tilde{J}_z(\kb,\jb)_{y,z},
\end{equation}
%Flips sign under $k \to -j-k$ after dividing by $|\kb|$.

\begin{equation}
\tilde{J}_z(\jb,\kb)_{z,z}=  -\nb \cdot (  \jb\times \kb)  \frac{|\jb||\kb| |\jb+\kb|_2}{|\jb|_2 |\kb|_2 |\jb+\kb|}.
\end{equation}
%Turns into $-\tilde J_y(\jb,\kb)_{zy}/|\kb|$ under $k \to -j-k$ after dividing by $|\kb|$.
\end{proof}

The condition for antisymmetry now reads
\begin{equation}
\tilde{J}_y(\jb,\kb) +\tilde{J}_y(\kb,\jb)^T= \tilde{J}_z(\jb,\kb) +\tilde{J}_z(\kb,\jb)^T = 0 \,.
\end{equation}

%%%%%%%%%%%
%%%% Helicity
%%%%%%%%%%%
\section{Helicity}

The three-dimensional Euler equations have a constant of motion, called the \emph{helicity}, 
defined by 
\[
	h =\frac{\aspect_x\aspect_y\aspect_z}{(2\pi)^3} \int_\mathcal{D} V\cdot (\nabla \times V)\;\mathrm{d}x \\
	  =\frac{\aspect_x\aspect_y\aspect_z}{(2\pi)^3} \int_\mathcal{D} V\cdot \Omega\;\mathrm{d}x,
\]
see, e.g., \cite{Gibbon08}. 
In terms of the Fourier modes on the periodic domain, this quantity becomes
\[
	h = \sum  \vb_\kb \cdot \omb_{-\kb}  \\
      = \sum \frac{ i  }{ |\kb|^2} \kb \cdot (\omb_{\kb} \times \omb_{-\kb}) .
\]
Note that $h\in\mathbb{R}$, as $\omb_{-\kb}=\overline{\omb}_{\kb}$ so $\omb_\kb\times \omb_{-\kb}\in i\mathbb{R}^3$.

\begin{lemma}[The helicity is a Casimir]
\label{lem:helicity}
The helicity $ \hel$ given by \eqref{eqn:helicity} is a Casimir of the projected Poisson structure  $\mathcal{J}$ 
described in Theorem~\ref{thm:Jprojected}.
\end{lemma}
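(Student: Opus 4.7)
The plan is to verify the Casimir property $\{h,g\}=0$ for every function $g$ directly from the bracket formula \eqref{eq:3DBracketPROOF} with structure matrix $\mathcal{J}$. Using the antisymmetry $\mathcal{J}(\jb,\kb)^{\transp} = -\mathcal{J}(\kb,\jb)$ inherited from Lemma~\ref{lem:JProperties}, this reduces to showing that for every $\kb \in D$,
\[
   \sum_{\jb \in D \setminus \zeroS} \mathcal{J}(\kb,\jb)\,\frac{\partial h}{\partial \omb_\jb} = 0.
\]

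First I would compute the gradient of the helicity. Only the two terms in the sum defining $h$ indexed by $\pm\jb$ involve $\omb_\jb$, and a short calculation using the cyclic identity for the triple product shows that they contribute equally. Combined with \eqref{eq:velocitymodes}, this yields
\[
   \frac{\partial h}{\partial \omb_\jb} = -\frac{2i}{|\jb|^2}\,\jb \times \omb_{-\jb} = 2\vb_{-\jb}.
\]

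Next I would substitute this gradient together with the explicit form \eqref{eq:BDef} for $\mathcal{J}$ (with $\omb_{\jb+\kb}$ replaced by $\mathcal{P}\omb_{\jb+\kb}$) into the sum. Two vector identities are helpful: the BAC--CAB rule gives $\jb \times \vb_{-\jb} = i\,\mathcal{P}\omb_{-\jb}$, and the scalar triple product then gives $(\jb \times \kb)\cdot \vb_{-\jb} = -i\,\kb\cdot\mathcal{P}\omb_{-\jb}$. After substitution the summand takes the form
\[
   \mathcal{J}(\kb,\jb)\,\vb_{-\jb}
   = -i\,(\kb \cdot \mathcal{P}\omb_{-\jb})\,\mathcal{P}\omb_{\kb+\jb}
     + i\,(\kb \cdot \mathcal{P}\omb_{\kb+\jb})\,\mathcal{P}\omb_{-\jb},
\]
in which the Fourier indices $-\jb$ and $\kb+\jb$ appear symmetrically. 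The desired identity then follows by the change of summation variable $\jb \mapsto -\kb-\jb$, which swaps these two indices and sends the first term to the negative of the second, so the sum collapses.

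The step that I expect to require the most care is keeping track of the projector in the second step. Because $\mathcal{J}$ is designed to be a Poisson structure on the \emph{full} space, the relations $\lb\cdot\omb_\lb = 0$ no longer hold; it is precisely the projected quantities $\mathcal{P}\omb_\lb$ that appear symmetrically under the index substitution, and omitting the projectors would leave uncancelled terms proportional to $\lb \cdot \omb_\lb$. Once the projected form above is in place, the final cancellation is immediate.
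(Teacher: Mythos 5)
Your proof is correct and follows essentially the same route as the paper: you show that the gradient $\partial h/\partial\omb_\jb = 2\vb_{-\jb}$ lies in the kernel of $\mathcal{J}$ by pairing the summands with indices $\jb$ and $-\kb-\jb$, which is precisely the cancellation identity \eqref{eqn:calJid} that the paper invokes. The only difference is that you supply the explicit vector-algebra verification of that identity (via $\jb\times\vb_{-\jb}=i\,\mathcal{P}\omb_{-\jb}$ and the triple product), which the paper states without proof.
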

\begin{proof}
To show that $\hel$ is a Casimir we need to show that $\{ g, \hel \}  = 0$ for any function $g$.
This requires that $\nabla_\omega h$ is in the kernel of $\mathcal{J}$. 
The gradient of $h$ is given by 
\[
 \frac{\partial h}{\partial \omb_\kb}  = -\frac{2i}{|\kb|^2} \kb \times \omb_{-\kb}  \,.
\]
%% 
%\rem{JDM: Factor of 2 like on page 13}
%%
and so we must have
\begin{equation}
\label{eqn:hsum}
   -i \sum_{\kb \in D} \mathcal{J}(\jb, \kb)  \frac{1}{|\kb|^2} \kb \times \omb_{-\kb} = 0.
\end{equation}
This is true because of the identity
\begin{equation}
\label{eqn:calJid}
    \mathcal{J}(\jb,\kb, \omb_{\jb+\kb}) \frac{1}{|\kb|^2}\kb \times \omb_{-\kb} 
 + \mathcal{J}(\jb,-\jb-\kb, \omb_{-\kb}) \frac{1}{|\jb+\kb|^2} (-\jb-\kb) \times \omb_{\jb+\kb} = 0
\end{equation}
so that in the sum \eqref{eqn:hsum} pairs of terms with index $\kb$ and index $-\jb - \kb$ cancel.
\end{proof}

Now we are going to show that a similar result holds for the Poisson structure restricted to the divergence free subspace.
By transforming to the coordinates introduced in Section \ref{sec:divfreess} and noting that
\begin{equation}
\begin{split}
\omb_{\kb} \times \omb_{-\kb} 
&= R^\transp_{\kb} \check \omb_{\kb}  \times R^\transp_{-\kb} \check \omb_{-\kb} \\
            & = R^\transp_{\kb} (\check \omb_{\kb} \times S \check \omb_{-\kb})  \\
            & = R^\transp_{\kb} ( \check \omb_{\kb} \times \overline{\check{\omb}}_{\kb} ) \\
	& =  i  \frac{ \kb} {|\kb|}  2 \Im ( \tilde\omega_{\kb, y} \overline{\tilde\omega}_{\kb,z} )
          = \frac{ \kb} {|\kb|} ( \tilde\omega_{\kb,y} \tilde \omega_{-\kb,z} + \tilde \omega_{\kb,z} \tilde \omega_{-\kb,y}),
\end{split}
\end{equation}
we see that the helicity can be rewritten as
\begin{equation}\label{eqn:helicity}
%  \tilde h = \sum_\kb \frac{i}{|\kb|}  ( \tilde\omega_{\kb,y} \tilde \omega_{-\kb,z} + \tilde \omega_{\kb,z} \tilde \omega_{-\kb,y} ).
   \tilde \hel = \sum_\kb \frac{1}{|\kb|}  2 \Im ( \overline{\tilde\omega}_{\kb, y} {\tilde\omega}_{\kb,z} ) .   
\end{equation}
We now confirm that this is a Casimir of the Poisson system described in Proposition \ref{prop:redstr}. 
\begin{lemma}
\label{lem:helicity}
The helicity $\tilde  \hel$ given by \eqref{eqn:helicity} is a Casimir of the restricted Poisson structure described in Proposition~\ref{prop:redstr}.
\end{lemma}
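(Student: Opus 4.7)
The plan is to deduce the result directly from Lemma \ref{lem:helicity} via the linear change of coordinates constructed in Proposition \ref{prop:redstr}. The key observation is that the transformation $\omb_\jb \mapsto \tilde{\omb}_\jb$ --- obtained by rotating $\check{\omb}_\jb = R_\jb \omb_\jb$ and then discarding the $x$-component, which vanishes on $\mathcal{M}$ by the divergence-free condition --- is a linear bijection between $\mathcal{M}$ and $\tilde{\mathcal{M}}$. The calculation displayed just before the statement of this lemma already verifies that under this change of coordinates the helicity $h$ becomes $\tilde{h}$.

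Next, by the proof of Proposition \ref{prop:redstr}, the reduced bracket $\{\cdot,\cdot\}_{\tilde{J}}$ is the pushforward of $\{\cdot,\cdot\}_{\mathcal{J}}|_{\mathcal{M}}$ under this linear isomorphism: the transformed structure matrix $\check{J}$ has vanishing first row and column on $\mathcal{M}$, so the $2\times 2$ block $\tilde{J}$ faithfully reproduces the restricted bracket in the new coordinates. Because Casimirs are intrinsic to the Poisson structure and hence preserved under Poisson isomorphisms, and because $h$ is a Casimir of $\mathcal{J}$ (and therefore of its restriction to $\mathcal{M}$) by Lemma \ref{lem:helicity}, the image $\tilde{h}$ is automatically a Casimir of $\tilde{J}$. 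This would conclude the proof.

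As a more concrete alternative verification, one could compute the gradient $\partial \tilde{h}/\partial \tilde{\omb}_\kb$ from \eqref{eqn:helicity}, bearing in mind the reality condition $\tilde{\omb}_{-\kb} = \tilde{S}\,\overline{\tilde{\omb}}_\kb$, substitute into
\[
\sum_{\kb \in D} \tilde{J}(\jb,\kb)\, \tilde{S}\, \frac{\partial \tilde{h}}{\partial \tilde{\omb}_\kb} ,
\]
and check that the terms indexed by $\kb$ and $-\jb-\kb$ cancel pairwise. This cancellation is simply the reduced-coordinate expression of the identity \eqref{eqn:calJid}, now written out using the explicit formulas for $\tilde{J}_y, \tilde{J}_z$ furnished by the previous lemma.

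The main obstacle is keeping sign conventions straight. The relation $R_{-\jb}R_\jb^\transp = S$ forces the reduced reality condition to involve $\tilde{S} = \mathrm{diag}(-1,1)$ rather than the identity, and these signs propagate through both $\partial \tilde{h}/\partial \tilde{\omb}_\kb$ and the bracket computation; in particular, one must track them through the special cases $\jb \parallel \nb$, $\kb \parallel \nb$, $\jb+\kb \parallel \nb$ listed before the previous lemma. The abstract Poisson-isomorphism route is preferable precisely because it bypasses this bookkeeping, resting only on the linearity of the change of variables and the intrinsic nature of Casimir functions.
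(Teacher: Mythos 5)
Your primary argument is correct but takes a genuinely different route from the paper's. The paper proves this lemma by the ``concrete alternative'' you sketch at the end: it computes $\frac{\partial \tilde h}{\partial \tilde{\omb}_\kb}=\frac{2i}{|\kb|}\bigl(\tilde\omega_{-\kb,z},\ \tilde\omega_{-\kb,y}\bigr)^{\transp}$, left-multiplies by $\tilde J(\jb,\kb)$, and observes that the pairwise cancellation of the terms indexed by $\kb$ and $-\jb-\kb$ is exactly the set of identities \eqref{eq:reqcond} established (by direct computation from the explicit formulas for $\tilde J_y,\tilde J_z$) in the preceding lemma. Your abstract route instead pushes the first helicity lemma through the reduction: $h$ is a Casimir of $\mathcal J$, hence of the restriction of $\mathcal J$ to the Poisson submanifold $\mathcal M$ (a level set of the Casimirs $\jb\cdot\omb_\jb$, where moreover $\mathcal J=J$), and the linear map $\omb_\jb\mapsto\tilde{\omb}_\jb$ is, by the proof of Proposition \ref{prop:redstr}, a Poisson isomorphism onto $(\tilde{\mathcal M},\tilde J)$ carrying $h$ to $\tilde h$; Casimirs are preserved under such maps. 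This buys you independence from the explicit formulas and their sign bookkeeping, at the cost of resting entirely on the unproved-in-detail identity \eqref{eqn:calJid} behind the first lemma; the paper's computation, by contrast, independently validates the explicit entries of $\tilde J$ (and the paper itself remarks afterward that \eqref{eq:reqcond} is the componentwise projection of \eqref{eqn:calJid}, so the two routes are consciously parallel). One small correction to your alternative verification: the Casimir condition for the bracket \eqref{eq:3DBracketReduced} is $\sum_{\kb}\tilde J(\jb,\kb)\,\partial\tilde h/\partial\tilde{\omb}_\kb=0$ with no factor of $\tilde S$; the $\tilde S$ appearing in \eqref{eq:projode} belongs to the gradient of $\tilde H$, not to the bracket.
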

\begin{proof}
% To show that $\tilde \hel$ is a Casimir, we must show that $\{\,g\,,\hel\}=0$ for any function $g$. 
For $\tilde \hel$ to be a Casimir of $\tilde J$ 
the sum  $\sum_{\kb \in D} \tilde{J}(\jb,\kb)\frac{\partial \tilde h}{\partial \tilde{\omega}_\kb}=0$ for all $\jb\in D$.
 Now, $\frac{\partial \tilde h}{\partial \tilde{\omega}_\kb}=\frac{2i}{|\kb|}\begin{pmatrix} \tilde \omega_{-\kb,z} \\ \tilde \omega_{-\kb,y}\end{pmatrix}$. 
By left-multiplying by $\tilde{J}(\jb,\kb)$, and separating out the terms, the required conditions are exactly those stated in \eqref{eq:reqcond}. We can thus conclude that the helicity is a Casimir.
\end{proof}

It is clear that the identities \eqref{eq:reqcond} are the projected component wise version of 
the identity \eqref{eqn:calJid}.

%%%%%%%%%%%
%%%% Example
%%%%%%%%%%%
\section{An Example: Shear Flows}

A shear flow on the torus is an equilibrium solution $V^{e}$ of Euler's equations with the direction of the velocity constant and  $(V^{e} \cdot \nabla )V^{e} = 0$.
The vorticity $\Omega^{e} = \nabla \times V^{e}$ then must have the form 
\[
   \Omega^e(\xb) = \Gb C( \pb \cdot \xb) 
\]
where $C$ is an arbitrary periodic function with period $2\pi$, $\pb = (p_1, p_2, p_3)^\transp$ is a fixed vector with components that are co-prime integers, and $\Gb$ is a real fixed vector satisfying $\Gb \cdot \pb = 0$.
Let the Fourier coefficients of $C$ be $c_n$, 
then the corresponding Fourier modes are $\omb_{\kb}^e = \Gb c_n \delta_{\kb - n \pb}$.
Thus 
\[
   \left.\dot \omb_\jb\right|_{\omb^e} = 
   \sum_\kb J(\jb, \kb, \Gb c_n \delta_{\kb + \jb - n \pb})  \frac{\Gb c_{-m}  \delta_{ -\kb + m \pb} }{|\kb|^2}
\]
where we have written $J$ with three arguments to make the linear dependence on the last argument 
explicit.
The Kronnecker $\delta$'s pick out the terms $\kb = m \pb$ and $\jb = (n - m ) \pb$.
Relabelling $n \to n+m $, the non-zero equations become
\[
   \left. \dot \omb_{n \pb}\right|_{\omb^e}  = \sum_m J(n \pb, m \pb, \Gb c_{m+n}) \frac{\Gb  c_{-m } }{ |m\pb|^2} \,.
\]
Now $J$ vanishes because $\pb \cdot \Gb = 0$ (the divergence-free condition) and because $\jb=n\pb$ and $\kb=m\pb$ are parallel.
Hence $\Omega^e$  is indeed an equilibrium, since either $J$ vanishes or is multiplied by zero coming from the 
gradient of the Hamiltonian.
A similar argument applies to the projected Poisson structure $\mathcal{J}$ and the restricted Poisson structure $\tilde J$.

A point of a Poisson structure at which the co-rank of the Poisson 
tensor is less than the maximal co-rank is called a \emph{singular point} of the Poisson structure. 
If the equilibrium is a regular point of the Poisson structure
then the gradient of the Hamiltonian is a linear combination of 
gradients of the Casimirs, proving that the gradient of the Hamiltonian is in 
the kernel of the Poisson structure, and hence the vector field vanishes at the equilibrium point.
% This is the basis of the Energy-Casimir method \cite{}.
For example, this happens for 
the two-dimensional case and has been used in \cite{Dullin2016} to study the nonlinear 
stability of two-dimensional shear flows through the Energy-Casimir method.

In the three-dimensional case at hand here, the gradient of the Hamiltonian $ H$
is not proportional to the gradient of the Casimir $ \hel$, as we now show.
At the equilibrium point the gradient $\nabla_\omega  H$ has components
\[
   \left. \frac{ \partial  H}{\partial  \omb_\kb} \right|_{ \omb^e} = 
   		\frac{1}{|n \pb|^2}  \Gb c_{-n} \delta_{ \kb + n \pb}
\]
The gradient of $ \hel$ at the equilibrium is 
\[
   \left. \frac{ \partial  h}{\partial  \omb_\kb} \right|_{ \omb^e} = 
   		\frac{2i}{|n \pb|^2}  n \pb \times \Gb c_{-n} \delta_{ \kb + n \pb}
\]
Clearly the two gradients are not proportional.
Assuming that there is no additional (unknown) Casimir function
this implies that the equilibrium point is a singular point of the Poisson structure, 
i.e.\ a point where the co-rank drops.
A similar calculation with the same result can be done for the restricted Poisson structure $\tilde J$.

% \footnote{Please check the two formulas above}

When the gradient of the Hamiltonian is a linear combination of gradients 
of Casimirs the linearisation of the vector field can be found 
by multiplying the Poisson structure at the equilibrium by a certain Hessian.
This is, however, not true at a singular point of the Poisson structure.
We refrain from giving explicit formulas for the linearisation here, but we conclude this example
by observing that because periodic shear flows in three dimensions are singular points 
of the Poisson structure the Energy-Casimir method cannot be applied in this case.
This is a notable difference from shear flows in two dimensions.

%\footnote{
%HRD: 
%For the reduced Poisson structure we need to choose $\nb$.
%It seems that for this example either $\nb = \pb$ or $\nb = \Gb$ are reasonable choices.
%In the first case there is no rotation to be done in the non-zero remaining terms, 
%and we already have $\pb \cdot \omb_{\pm \pb} = \pb \cdot \Gb = 0$ without rotation. 
%However, it seems that this does not actually work,
%since we still would need to rotate in order to get $\pb \cdot \Gb$ as the 
%first (vanishing) component. So is there a problem with the statement after
%(4.1) that we can just set $R = id$? Are we assuming that a global rotation 
%has already been done that rotates $\nb$ to the $z$-axis?
%}

%%%%%%%%%%%
%%%% Conclusion
%%%%%%%%%%%
\section{Conclusion}

By describing the dynamics of the vorticity Fourier modes as a Poisson system, we have opened the possibility of future study analysing and exploiting this structure.

One possible use for this structure could be the development of a structure preserving integrator akin to the integrator for the two-dimensional Euler equations in \cite{Mclachlan93}. Even though that integrator in practice requires the finite-dimensional truncation developed in \cite{Zeitlin90}, one may hope to also find such a truncation and hence a similar integrator for the three-dimensional problem.
In the two-dimensional case the analogue of our $\tilde J$ (or $\mathcal{J}$) is simply the $z$-component of the 
cross product of $ \jb \times \kb$, and the finite-dimensional truncation leads to the sine-bracket \cite{Zeitlin90}.
Our Poisson structure $\tilde J$ for the three-dimensional case is already so much more complicated
algebraically that we were not able to find an analogue of the sine-bracket.
Nevertheless, we suspect that such a structure preserving truncation exists.

The projected Poisson structure $\mathcal{J}$ and the restricted  Poisson structure $\tilde J$ 
can also be the basis for nonlinear stability analysis.
In the two-dimensional case  \cite{Dullin2016} nonlinear stability results were obtained 
using the Energy-Casimir method.
But we have shown above that for periodic shear flows in three dimensions the Energy-Casimir method
fails, because the Poisson structure is singular at the equilibrium. 
It would be interesting to find other equilibria for which the gradient of the Hamiltonian $ H$ 
and the gradient of the helicity $ \hel$ are proportional (and hence these are regular points of 
the Poisson structure), and for such equilibria the Energy-Casimir method may work.
It has been suggested that Beltrami flows may provide such examples.

 \section*{Acknowledgements} 
JDM acknowledges support from the US National Science Foundation under grant DMS-1812481.
This research was completed while HRD and JDM were in residence at the Mathematical Sciences Research
Institute in Berkeley, California, during the Fall 2018 semester, supported U.S. National Science Foundation
under grant DMS-1440140.
We would like to thank the two anonymous referees whose comments and questions have greatly improved the manuscript.

% BibTeX users please use one of
%\bibliographystyle{spbasic}      % basic style, author-year citations
\bibliographystyle{amsalpha}      % mathematics and physical sciences
\bibliography{thesis_bib}{}       % name your BibTeX data base

\newcommand{\etalchar}[1]{$^{#1}$}
\providecommand{\bysame}{\leavevmode\hbox to3em{\hrulefill}\thinspace}
\providecommand{\MR}{\relax\ifhmode\unskip\space\fi MR }
% \MRhref is called by the amsart/book/proc definition of \MR.
\providecommand{\MRhref}[2]{%
  \href{http://www.ams.org/mathscinet-getitem?mr=#1}{#2}
}
\providecommand{\href}[2]{#2}
\begin{thebibliography}{CDGB{\etalchar{+}}13}

\bibitem[AK98]{Arnold98}
V.I. Arnold and Boris~A. Khesin, \emph{{Topological Methods in Hydrodynamics}},
  Springer, 1998.

\bibitem[Arn66a]{Arnold69}
V.I. Arnold, \emph{{An a priori estimate in the theory of hydrodynamic
  stability}}, Izvestiya Vysshikh Uchebnykh Zavedenii. Matematika (1966),
  no.~5, 3--5.

\bibitem[Arn66b]{Arnold66}
\bysame, \emph{{Sur la g{\'e}om{\'e}trie diff{\'e}rentielle des groupes de
  {L}ie de dimension infinie et ses applications {\`a} l'hydrodynamique des
  fluides parfaits}}, {Annales de l'institut Fourier} \textbf{16} (1966),
  no.~1, 319--361.

\bibitem[Arn78]{Arnold78}
\bysame, \emph{{Mathematical Methods of Classical Mechanics}}, Springer, 1978.

\bibitem[CDGB{\etalchar{+}}13]{Chandre13}
Cristel Chandre, Lo{\"\i}c De~Guillebon, Aurore Back, Emanuele Tassi, and
  Philip~J Morrison, \emph{On the use of projectors for hamiltonian systems and
  their relationship with dirac brackets}, Journal of Physics A: Mathematical
  and Theoretical \textbf{46} (2013), no.~12, 125203.

\bibitem[Con07]{Constantin07}
P.~Constantin, \emph{On the {E}uler equations of incompressible fluids}, Bull.
  AMS \textbf{44} (2007), no.~4, 603--621.

\bibitem[DMW16]{DMW16}
H.R. Dullin, R.~Marangell, and J.~Worthington, \emph{{Instability of Equilibria
  for the Two-Dimensional {E}uler Equations on the Torus}}, SIAM J. App. Math.
  \textbf{76} (2016), 1446--1470.

\bibitem[DW17]{Dullin2016}
H.R. Dullin and J.~Worthington, \emph{Stability results for idealized shear
  flows on a rectangular periodic domain}, J. Math. Fluid Mech. \textbf{20}
  (2017), 1--12.

\bibitem[Gib08]{Gibbon08}
John~D Gibbon, \emph{The three-dimensional {E}uler equations: Where do we
  stand?}, Physica D \textbf{237} (2008), no.~14, 1894--1904.

\bibitem[KM80]{Kuznetsov80}
E.~A. Kuznetsov and A.~V. Mikhailov, \emph{On the topological meaning of
  canonical clebsch variables}, Physics Letters A \textbf{77} (1980), no.~1,
  37--38,
  \url{http://www.sciencedirect.com/science/article/pii/0375960180906271}.

\bibitem[McL93]{Mclachlan93}
R.I. McLachlan, \emph{Explicit {L}ie-{P}oisson integration and the {E}uler
  equations}, Phys. Rev. Lett. \textbf{71} (1993), no.~19, 3043--3046,
  \url{https://doi.org/10.1103/PhysRevLett.71.3043}.

\bibitem[Mei17]{Meiss17}
J.D. Meiss, \emph{Differential dynamical systems: Revised edition}, revised
  ed., Mathematical Modeling and Computation, vol.~22, SIAM, Philadelphia,
  2017.

\bibitem[Mor82]{Morrison82}
P.J. Morrison, \emph{Poisson brackets for fluids and plasmas}, Mathematical
  Methods in Hydrodynamics and Integrability in Dynamical Systems (M~Tabor and
  Y.~Treve, eds.), vol.~88, AIP, New York, 1982, pp.~13--46.

\bibitem[Mor98]{Morrison98}
\bysame, \emph{{Hamiltonian description of the ideal fluid}}, Reviews of Modern
  Physics \textbf{70} (1998), no.~2, 467.

\bibitem[Mor17]{Morrison17}
\bysame, \emph{Structure and structure-preserving algorithms for plasma
  physics}, Physics of Plasmas \textbf{24} (2017), no.~5, 055502.

\bibitem[Olv00]{Olver00}
P.J Olver, \emph{Applications of {L}ie groups to differential equations}, vol.
  107, Springer Science \& Business Media, 2000.

\bibitem[PMT{\etalchar{+}}11]{Pavlov11}
D.~Pavlov, P.~Mullen, Y.~Tong, E.~Kanso, J.E. Marsden, and M.~Desbrun,
  \emph{Structure-preserving discretization of incompressible fluids}, Physica
  D \textbf{240} (2011), no.~6, 443--458,
  \url{https://protect-au.mimecast.com/s/xMnXB1UxaX7bTD?domain=dx.doi.org}.

\bibitem[Sal88]{Salmon88}
R.~Salmon, \emph{Hamiltonian fluid mechanics}, Ann. Rev. Fluid Mech.
  \textbf{20} (1988), no.~1, 225--256.

\bibitem[Sal04]{Salmon04}
\bysame, \emph{{P}oisson-bracket approach to the construction of energy and
  potential-enstrophy conserving algorithms for the shallow-water equations},
  J. Atmospheric Sci. \textbf{61} (2004), 2016--2036.

\bibitem[Wor17]{Worthington17}
J.~Worthington, \emph{Stability theory and {H}amiltonian dynamics in the
  {E}uler ideal fluid equations}, Ph.D. thesis, University of Sydney, 2017.

\bibitem[Zei91]{Zeitlin90}
V.~Zeitlin, \emph{{Finite-mode analogs of 2{D} ideal hydrodynamics: Coadjoint
  orbits and local canonical structure}}, Physica D \textbf{49} (1991), no.~3,
  353--362.

\bibitem[Zei05]{Zeitlin05}
\bysame, \emph{{On self-consistent finite-mode approximations in
  (quasi-)two-dimensional hydrodynamics and magnetohydrodynamics}}, Phys. Lett.
  A \textbf{339} (2005), 316--324.

\bibitem[ZK97]{Zakharov97}
V.E. Zakharov and E.~A. Kuznetsov, \emph{Hamiltonian formalism for nonlinear
  waves'' 1(1997); doi:}, Phys. Usp. \textbf{40} (1997), 1087--1116,
  \url{https://doi.org/10.1070/PU1997v040n11ABEH000304}.

\end{thebibliography}

\end{document}